\documentclass[11pt]{article}
\usepackage[totalwidth=13.0cm,totalheight=20.0cm]{geometry}
\usepackage{latexsym,amsthm,amsmath,amssymb,url}
\usepackage[ruled, linesnumbered]{algorithm2e}

\usepackage{tikz,authblk}
\usetikzlibrary{decorations.pathreplacing}

\newtheorem{corollary}{Corollary}
\newtheorem{definition}{Definition}
\newtheorem{theorem}{Theorem}

\newtheorem{proc}{Procedure}

\begin{document}

\title{Acyclicity in Edge-Colored Graphs}

\date{}

\author[1,2]{Gregory Gutin}\author[1]{Mark Jones}\author[1]{Bin Sheng}\author[1]{Magnus Wahlstr{\"o}m}
\author[3,4]{Anders Yeo}
\affil[1]{Royal Holloway, University of London, TW20 0EX, Egham, Surrey, UK}
\affil[2]{University of Haifa, Mount Carmel, Haifa, 3498838, Israel}
\affil[3]{Singapore University of Technology and Design, 8 Somapah Road 487372, Singapore} 
\affil[4]{University of Johannesburg, Auckland Park, 2006 South Africa}
\maketitle

\pagestyle{plain}

\begin{abstract}
A walk $W$ in edge-colored graphs is called properly colored (PC) if every pair of consecutive edges in $W$ is of different color.
We introduce and study five types of PC acyclicity in edge-colored graphs such that graphs of PC acyclicity of type $i$ is a proper superset of graphs of acyclicity of type $i+1$, $i=1,2,3,4.$
The first three types are equivalent to the absence of PC cycles, PC closed trails, and PC closed walks, {respectively}. 
While graphs of types 1, 2 and 3 can be recognized in polynomial time, the problem of recognizing graphs of type 4 is, somewhat surprisingly, NP-hard even for 2-edge-colored graphs (i.e., when only two colors are used). The same problem with respect to type 5 is polynomial-time solvable for all edge-colored graphs.
Using the five types, we investigate the border between intractability and tractability for the problems of finding the maximum number of internally vertex-disjoint PC paths between two vertices and the minimum number of vertices to meet all PC paths between two vertices.
\end{abstract}

\section{Introduction} \label{sec:1}
A {\em walk} in a multigraph is a sequence $W=v_1e_1v_2\dots v_{p-1}e_{p-1}v_p$ of alternating vertices and edges such that vertices $v_i$ and $v_{i+1}$ are end-vertices of edge $e_i$ for every $i\in [p-1]$.  
A walk $W$ is {\em closed} ({\em open}, respectively) if $v_1=v_p$ ( $v_1\neq v_p$, respectively). 
A {\em trail} is a walk in which all edges are distinct, a {\em path} is a non-closed walk in which all vertices are distinct, and a {\em cycle} is a closed walk where all vertices apart from the first and last ones are distinct. 

In this paper, we study properly colored walks in graphs with colored edges, which are called {\em edge-colored graphs} or {\em $c$-edge-colored graphs} when colors are taken from the set $[c]=\{1,2,\dots ,c\}$.
For 2-edge-colored graphs we use colors blue and red instead of 1 and 2. A walk $W=v_1e_1v_2\dots v_{p-1}e_{p-1}v_p$ is {\em properly colored (PC)} if edges $e_i$ and $e_{i+1}$ are of different colors for every $i\in \{1,2,\dots, p-2\}$ and, in addition, if $W$ is closed then edges $e_{p-1}$ and $e_1$ are of different colors. PC walks are of interest  in graph theory applications, e.g., in molecular biology \cite{Dor1,Dor2,Dor3,Pev} and in VLSI for compacting programmable logical arrays \cite{HuKu}. They are also of interest in graph theory itself as generalizations of walks in undirected and directed graphs. Indeed, consider the standard transformation from a directed graph $D$ into a 2-edge-colored graph $G$ by replacing every arc $uv$ of $D$ by a path with blue edge $uw_{uv}$ and red edge $w_{uv}v$, where $w_{uv}$ is a new vertex \cite{JBJGG}.  Clearly, every directed walk in $D$ corresponds to a PC walk in $G$ and vice versa. 
On the other hand, if every edge has a distinct color (or more generally, if the coloring is proper), then clearly all trails of the underlying undirected graph~$G$ are PC trails. 
There is an extensive literature on PC walks: for a detailed survey of pre-2009 publications, see Chapter 16 of \cite{JBJGG}, and more recent papers include 
\cite{ADFKMMS,FuMa,GJSWY,Lo1,Lo2,Lo3}.

The following notion of a monochromatic vertex will often be used in this paper. A vertex $v$ in an edge-colored graph $G$ is called {\em $G$-monochromatic} if all edges incident to $v$ in $G$ are of the same color. 
Clearly, a PC { closed} walk has no $G$-monochromatic vertex. 

It is well-known and trivial to prove that every undirected and directed graph with no cycles, has no closed walks either. Surprisingly, this is not the case for PC cycles and PC walks. In fact, the properties of having no PC cycles, having no PC closed trails, and having no PC closed walks, are all distinct.
%
%
%
In this paper, in order to better understand the structure of acyclic edge-colored graphs, we introduce five types of PC acyclicity as follows.

\begin{definition}
Let $G$ be a $c$-edge-colored undirected graph, $c\ge 2$. 
An ordering $v_1,v_2,\ldots,v_n$ of vertices of~$G$ is {\bf of type}
\begin{description}
\item[1] if for every $i \in [n]$, all edges from $v_i$ to each connected component of $G[\{v_{i+1},v_{i+2},\ldots,v_n\}]$ have the same color;
\item[2] if for every $i \in [n]$, all edges from $v_i$ to $\{v_{i+1},v_{i+2},\ldots,v_n\}$ which are not bridges in $G[\{v_{i},v_{i+1},\ldots,v_n\}]$  have the same color.
\item[3] if for every $i \in [n]$, all edges from $v_i$ to $\{v_{i+1},v_{i+2},\ldots,v_n\}$ have the same color;
\item [4] if for every $i \in [n]$, all edges from $v_i$ to $\{v_{i+1},v_{i+2},\ldots,v_n\}$ have the same color and
all edges from $v_i$ to $\{v_1,v_2,\ldots,v_{i-1}\}$ have the same color;
\item [5] if for every $i \in [n]$, all edges from $v_i$ to $\{v_{i+1},v_{i+2},\ldots,v_n\}$ have the same color and
all edges from $v_i$ to $\{v_1,v_2,\ldots,v_{i-1}\}$ have the same color but different from the color of edges from $v_i$ to $\{v_{i+1},v_{i+2},\ldots,v_n\}$.
\end{description}
\end{definition}

\begin{definition}
Let $i\in [5]$. $G$ is {\bf PC acyclic of type $i$} if it has an ordering $v_1,v_2,\ldots,v_n$ of vertices of type $i$. 
\end{definition}

Clearly, the class of $c$-edge-colored acyclic graphs of type $i$ contains the class  of $c$-edge-colored acyclic graphs of type $i+1$, $i\in [4]$. We will see later in the paper that the containments are proper. 
We will see that graphs of the first two types coincide with edge-colored graphs without PC cycles and without PC closed trails, respectively. These two classes of edge-colored graphs were characterized by Yeo \cite{Yeo} and Abouelaoualim {\em et al.} \cite{ADFMMS}, respectively. We will prove that graphs  of PC acyclicity of type 3 are edge-colored graphs without PC walks. We are unaware of a ``nice'' characterization of edge-colored graphs of  type 4. In fact, we show that it is NP-hard to recognize graphs of this type, which is somewhat surprising as we prove that recognition of all other types is polynomial-time solvable. We will prove that an edge-colored graph is acyclic of type 5 if and only if every vertex is incident to edges of at most two colors and every cycle $C$ has a positive even number of vertices incident, in $C$, to edges of the same color. 
For $2$-edge-colored graphs this is equivalent to being bipartite with no PC cycle. Therefore for $2$-edge-colored graphs, being PC acyclic of type 5 is the same as being bipartite and PC acyclic of type 1.

Using the five types, we will investigate the border between intractability and tractability for the problems of finding the maximum number of internally vertex-disjoint PC paths between two  vertices and of finding the minimum number of vertices to eliminate all PC paths between two vertices. We will prove that both problems are NP-hard for 2-edge-colored graphs of PC acyclicity of type 3 (and thus of types 1 and 2), but polynomial time solvable for edge-colored graphs of PC acyclicity of type 4 (and 5). We will also show that while Menger's theorem does not hold in general, { even} on 2-edge-colored graphs of PC acyclicity of type 3 (or 1 or 2), it holds on edge-colored graphs of PC acyclicity of type 4 (and 5).

The rest of the paper is organized as follows.  In Section \ref{sec:Types}, we study the five types of acyclicity of edge-colored graphs. Section \ref{sec:Paths} is devoted to PC paths and separators in edge-colored graphs. Finally, in Section \ref{sec:Prob} we discuss an open problem.


\section{Types of PC Acyclic Edge-Colored Graphs}\label{sec:Types}

In this section,  we study the five types of PC acyclicity introduced in the previous section.

The fact that a $c$-edge-colored graph $G$ is PC acyclic of type 1 if and only if $G$ has no PC cycle follows immediately from a theorem by Yeo \cite{Yeo} below (a special case of Yeo's theorem for $c=2$ was obtained by Grossman and H{\"a}ggkvist \cite{GrHa}).

\begin{theorem}\label{Yeo} 
If a $c$-edge-colored graph $G$ has no PC cycle then $G$ has a vertex $z$ such that every connected component of $G-z$ is joined to $z$ by edges of the same color.
\end{theorem}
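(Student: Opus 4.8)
The plan is to argue by induction on the number of vertices, using the following strengthened statement that is easier to iterate: if $G$ is a $c$-edge-colored graph with no PC cycle, then $G$ has a vertex $z$ such that every connected component of $G-z$ is joined to $z$ by edges of a single color. The base case (one vertex, or the empty graph) is trivial, so assume $|V(G)|\ge 2$ and that the claim holds for all smaller graphs.

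First I would dispose of the disconnected case: if $G$ is disconnected, apply induction to any component and note that a valid vertex $z$ for that component works for all of $G$, since the other components of $G$ are simply additional components of $G-z$, each monochromatically (indeed not at all) joined to $z$. So assume $G$ is connected. Now pick any vertex $u$ and consider $G-u$; by induction it has a vertex $z$ such that every connected component of $(G-u)-z$ is joined to $z$ monochromatically. The issue is that adding $u$ back may merge several components of $(G-u)-z$ into one component of $G-z$, and the edges from $z$ into this merged component may now have more than one color. The key structural step is therefore to analyze, for a well-chosen starting vertex $u$, how $u$ can fail to be ``absorbable,'' and to extract a PC cycle whenever the conclusion fails.

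The cleaner route, which I expect to be the main line of the argument, is this. Suppose for contradiction that \emph{every} vertex $z$ has some component of $G-z$ joined to it by at least two colors. Starting from an arbitrary vertex $v_0=z_0$, pick a component $K_0$ of $G-z_0$ joined to $z_0$ by two edges $z_0a$ and $z_0b$ of different colors; inside $K_0$ there is a path from $a$ to $b$, and together with $z_0$ this gives a cycle through $z_0$, though not necessarily PC. To force proper coloring, I would instead build the cycle greedily: maintain a PC path $P = x_0 x_1 \cdots x_k$ and at each stage use the failure hypothesis applied to $z = x_k$ — there is a component $K$ of $G - x_k$ meeting $x_k$ in two colors, so one of those two edges has a color different from the last edge $x_{k-1}x_k$ of $P$; extend $P$ into $K$ along that edge. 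Because each such $K$ lies in $G-x_k$ and we can choose to stay within it, with care this process either closes up into a PC cycle (contradiction) or must be managed so that the path does not get ``stuck'' — and the absence of a PC cycle is exactly what should eventually be violated. The main obstacle is making the extension argument terminate in a genuine PC \emph{cycle} rather than merely a long PC walk; handling the closing edge at $x_0$ (its color must differ from both incident path edges) is the delicate point, and it is likely cleanest to choose $x_0$ to be a vertex all of whose incident edges have been constrained, or to pick the initial two-colored component at a vertex of minimum degree / a leaf-block of an auxiliary structure so that the return edge's color is forced to be admissible. I would present the induction version above as the primary proof, since it localizes the difficulty to the single merge step, and invoke a short case analysis (at most two colors appear among $z$'s edges into the merged component, else a PC cycle is immediate; if exactly two, trace PC paths through two of the old components to close a PC cycle) to finish.
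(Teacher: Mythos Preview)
The paper does not actually prove Theorem~\ref{Yeo}; it quotes it from Yeo's 1997 paper and uses it as a black box. So there is no ``paper's own proof'' to compare against, only the original source.

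As for your proposal itself, it is a sketch with the crucial step left open in both branches. In the induction route, the claim that ``if exactly two colors appear among $z$'s edges into the merged component, trace PC paths through two of the old components to close a PC cycle'' does not go through as stated: the paths you would trace inside the old components $C_i$, $C_j$ are arbitrary paths, not PC paths, and the two edges at the linking vertex $u$ may well have the same color. You have located where the difficulty sits but not discharged it. In the greedy-extension route you say so explicitly: closing the PC path into a PC \emph{cycle} (controlling the color of the return edge at $x_0$) is exactly the heart of the theorem, and ``with care this process \ldots'' is not a proof.

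Yeo's actual argument does use a maximal-PC-path idea, but the point is precisely how one chooses the endpoint and argues about its neighbours. One takes a longest PC path $P=v_1v_2\cdots v_k$ and shows that $v_1$ is the desired vertex $z$: any edge from $v_1$ to a vertex off $P$ must share the color of $v_1v_2$ (else $P$ extends), and a careful analysis of chords $v_1v_i$ together with the segment $v_1\cdots v_i$ of $P$ shows that a violation of the conclusion at $v_1$ forces a PC cycle. The subtlety is in handling chords whose color coincides with $v_1v_2$ but differs from $v_{i-1}v_i$; this is where the real work lies, and your sketch does not yet contain it.
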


We will prove the following easy consequence of Theorem \ref{Yeo}.

\begin{corollary}\label{cor1}
A $c$-edge-colored graph $G$ is PC acyclic of type 1 if and only if $G$ has no PC cycle. 
\end{corollary}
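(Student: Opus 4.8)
The plan is to prove both implications directly, invoking Theorem~\ref{Yeo} only for the nontrivial direction. For the ``only if'' direction, suppose $G$ admits an ordering $v_1,\dots,v_n$ of type~1 and, for contradiction, that $G$ contains a PC cycle $C$. Let $v_i$ be the vertex of $C$ of smallest index, and let $u,w$ be its two neighbours on $C$. Deleting $v_i$ from $C$ leaves a $u$--$w$ path all of whose vertices have index larger than $i$, so $u$ and $w$ lie in the same connected component of $G[\{v_{i+1},\dots,v_n\}]$. By the definition of a type-1 ordering the edges $v_iu$ and $v_iw$ then have the same colour; but they are consecutive edges of the PC cycle $C$, so they must have different colours, a contradiction.

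For the ``if'' direction, I would argue by induction on $n=|V(G)|$, the case $n\le 1$ being trivial. Assume $G$ has no PC cycle. By Theorem~\ref{Yeo}, $G$ has a vertex $z$ such that every connected component of $G-z$ is joined to $z$ by edges of the same colour. Since $G-z$ is a subgraph of $G$, it too has no PC cycle, so by the induction hypothesis it has a type-1 ordering $v_2,\dots,v_n$. Set $v_1:=z$; I claim $v_1,v_2,\dots,v_n$ is a type-1 ordering of $G$. The condition for $i=1$ is exactly the property of $z$ given by Theorem~\ref{Yeo}, as $G[\{v_2,\dots,v_n\}]=G-z$. For $i\ge 2$, both the induced subgraph $G[\{v_{i+1},\dots,v_n\}]$ and the set of edges from $v_i$ into $\{v_{i+1},\dots,v_n\}$ are identical whether computed in $G$ or in $G-z$ (because $z=v_1$ is not among $v_{i+1},\dots,v_n$), so the required condition is inherited from the type-1 ordering of $G-z$.

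The argument is essentially routine; the only step needing care is the last observation, namely that reinserting $z$ at the front of the ordering does not disturb the type-1 conditions for $v_2,\dots,v_n$. This holds precisely because $z$ is placed first, so it never belongs to any tail $\{v_{i+1},\dots,v_n\}$ with $i\ge 2$, and hence the components and incident edges relevant to those conditions are unaffected.
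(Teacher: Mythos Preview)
Your proof is correct and follows essentially the same approach as the paper: both directions match the paper's argument, with your version spelling out the induction and the verification that prepending $z$ preserves the type-1 condition, which the paper leaves as ``clearly''.
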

\begin{proof}
Suppose that $G$ is PC acyclic of type 1 and has a PC cycle $C$. 
Consider an acyclic ordering of $V(G)$ of type 1. Let $x$ be the vertex on $C$ with lowest subscript in the acyclic ordering.
Observe that $C-x$ must belong to the same component in $G-x$ and all vertices of $C-x$ come after $x$ in the acyclic ordering. Thus, $x$ must have both incident  edges in the cycle of the same color, a contradiction.  

Now let $G$ have no PC cycle. By Theorem \ref{Yeo}, $G$ has a vertex $z$ such that every connected component of $G-z$ is joined to $z$ by edges of the same color. Set $v_1=z$ and consider $G-z$ to obtain $v_2,\dots ,v_n$. Clearly, the resulting ordering is PC acyclic of type 1.
\end{proof}
Theorem \ref{Yeo} implies that $G$ has no PC cycle if and only if $G$ has a vertex $z$ such that every connected component of $G-z$ is joined to $z$ by edges of the same color and $G-z$ has no PC cycle. Thus, checking PC acyclicity of type 1 can be done in polynomial time.

Using the next theorem, similarly to proving  Corollary \ref{cor1} we can show that a $c$-edge-colored graph $G$ is PC acyclic of type 2 if and only if $G$ has no PC closed trail.

\begin{theorem}[{Abouelaoualim {\em et al.} \cite{ADFMMS}}]
Let $G$ be a $c$-edge-colored graph, such that every vertex of $G$ is incident with at least two edges of different colors.
Then either $G$ has a bridge or $G$ has a PC closed trail.
\end{theorem}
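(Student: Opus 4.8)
The plan is to argue by contradiction. Suppose $G$ has no bridge and no PC closed trail, yet every vertex of $G$ is incident with at least two differently coloured edges; write $c(e)$ for the colour of an edge $e$. Since no bridge lies on a closed trail, each connected component of $G$ inherits all three properties, so we may assume $G$ is connected and hence, being bridgeless, $2$-edge-connected. Fix a PC trail $T = v_0 e_1 v_1 \cdots e_k v_k$ with as many edges as possible; then $k \ge 2$, since starting from any vertex we can always take two steps (every vertex has an incident edge whose colour differs from that of the edge just used). By maximality, every edge at $v_k$ whose colour differs from $c(e_k)$ already lies on $T$ (otherwise it could be appended), and symmetrically every edge at $v_0$ whose colour differs from $c(e_1)$ lies on $T$.

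First I would rule out $v_0 = v_k$. If $v_0 = v_k$ and $c(e_1) \ne c(e_k)$, then $T$ is itself a PC closed trail. If $v_0 = v_k$ and $c(e_1) = c(e_k)$, then the colour hypothesis gives an edge at $v_0$ of colour $\ne c(e_1)$, which by the remark above lies on $T$, so $v_0$ recurs at an interior vertex $v_j$; splitting $T$ at $v_j$ yields two closed trails with edge sets $\{e_1,\dots,e_j\}$ and $\{e_{j+1},\dots,e_k\}$, each properly coloured except possibly at its wrap-around pair, namely $\{c(e_j),c(e_1)\}$ and $\{c(e_k),c(e_{j+1})\}$ respectively, and since $c(e_j)\ne c(e_{j+1})$ at least one of these pairs is bichromatic, giving a PC closed trail. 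In every case we contradict our assumption, so $v_0 \ne v_k$.

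Now put $a = c(e_k)$. Since $v_k$ is incident with an edge of colour $\ne a$ and, by the remark, all such edges lie on $T$, the vertex $v_k$ occurs at an interior position of $T$; let $j < k$ be the largest index with $v_j = v_k$. Then $B := v_j e_{j+1} \cdots e_k v_k$ is a closed trail through $v_k$ whose interior avoids $v_k$, and if $c(e_{j+1}) \ne a$ then $B$ is a PC closed trail, a contradiction; so $c(e_{j+1}) = a$, and hence $c(e_{j+2}) \ne a$. To derive the final contradiction I would use $2$-edge-connectivity: as $e_k$ lies on a cycle, there is a path from $v_{k-1}$ to $v_k$ avoiding $e_k$; its last edge is incident with $v_k$ and, again by the remark, is either already an edge of $T$ or has colour $a$; splicing this path into $T$, or into the closed trail $B$, should yield either a PC trail with more edges than $T$ or a PC closed trail directly, in either case a contradiction. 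I expect this splicing step to be the main obstacle: one must keep precise track of which edges at $v_k$ remain unused and verify that the rerouted walk stays properly coloured, which will probably need a short case analysis on the colours of the last edge of the path, of $e_j$, and of the two edges of $B$ incident with $v_k$. (Bridgelessness must enter somewhere: a $2$-edge-connected graph in which every vertex is incident with two colours may still contain no PC cycle, so the statement is genuinely about closed trails rather than cycles.)
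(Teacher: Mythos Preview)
The paper does not prove this theorem; it is quoted from Abouelaoualim \emph{et al.} \cite{ADFMMS} and used as a black box to characterise PC acyclicity of type~2. There is therefore no ``paper's proof'' to compare against, so let me assess your argument on its own.

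The reduction to the connected, $2$-edge-connected case and the treatment of $v_0=v_k$ are correct (your splitting argument at an interior occurrence of $v_0$ is clean: since $c(e_j)\ne c(e_{j+1})$ and $c(e_1)=c(e_k)$, one of the two sub-trails closes up properly). The difficulty is exactly where you flag it, and it is more than a missing case analysis. Your plan for the final step is to take an arbitrary path $P$ from $v_{k-1}$ to $v_k$ avoiding $e_k$, guaranteed by $2$-edge-connectivity, and splice it into $T$ or $B$. But $P$ carries no colour structure whatsoever and may reuse edges of $T$ freely; the maximality of $T$ constrains only edges at $v_0$ and $v_k$, not edges along $P$ or at $v_{k-1}$. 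Knowing that the last edge of $P$ is either coloured $a$ or already lies on $T$ tells you nothing about how to make the spliced object simultaneously a trail and properly coloured. I do not see how any finite case split on the colours of $e_j$, $e_{j+1}$, $e_k$ and the last edge of $P$ rescues this: the obstruction is global (edge-reuse and lack of colour alternation along $P$), not local at $v_k$.

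If you want to push the maximal-trail idea through, a more promising continuation is to note that the rerouted trail $T''=v_0e_1\cdots e_jv_j\,e_k v_{k-1}\cdots e_{j+2}v_{j+1}$ is again a maximal PC trail on the \emph{same} edge set, now terminating at $v_{j+1}$, and to exploit the resulting constraints at $v_{j+1}$ as well; but making this terminate still requires a further idea. Alternatively, one can argue via Yeo's theorem (a graph with no PC closed trail has no PC cycle, hence admits a Yeo vertex $z$; combine this with bridgelessness to recurse). In any case, the bare ``take a cycle through $e_k$ and splice'' step, as written, does not close the gap.
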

This theorem implies that to check whether $G$ has a PC closed trail, it suffices to recursively delete all bridges and all $G$-monochromatic vertices. Observe that $G$ has a PC trail if and only if the resulting graph is non-empty. This implies that checking PC acyclicity of type 2 can be done in polynomial time.

To see that containment is proper between  acyclicities of type 1 and type 2, 
{consider a graph with vertex set $\{v_1, v_2, x, u_1, u_2\}$ and edge set $\{v_1v_2, v_1x, v_2x$, $u_1u_2, u_1x, u_2x\}$, where $v_1v_2, u_1x, u_2x$ are colored red, $u_1u_2, v_1x, v_2x$ are colored blue. Clearly, this 2-edge-colored graph $G$ has no PC cycle, but it has a PC closed trail. Thus, $G$ is acyclic of type 1 but not acyclic of type 2. }

It seems $c$-edge-colored graphs $G$ without PC closed walks have not been studied in the literature. Here is a counterpart of Theorem \ref{Yeo} for such graphs. 

\begin{theorem}\label{th3}
If a $c$-edge-colored graph $G$ has no PC closed walk then $G$ has a $G$-monochromatic vertex.
\end{theorem}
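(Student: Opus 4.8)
The plan is to argue by contradiction: suppose every vertex of $G$ is incident to edges of at least two different colors (i.e., $G$ has no $G$-monochromatic vertex), and show that $G$ must then contain a PC closed walk. The natural approach is to build a PC walk greedily and force it to close up. Concretely, I would start at any vertex $v_0$, pick any incident edge $e_1$ to reach $v_1$, and then repeatedly extend: having arrived at $v_i$ along an edge $e_i$ of some color, the non-monochromatic hypothesis at $v_i$ guarantees an incident edge $e_{i+1}$ whose color differs from that of $e_i$, so we can always extend the PC walk by one more edge. Since $G$ is finite, such an infinite sequence of edges must eventually repeat a ``state''.

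The key step is choosing the right notion of state so that a repeat yields a genuine PC \emph{closed} walk. The relevant state is the pair $(v_i, \text{color of } e_i)$ — the current vertex together with the color of the edge just traversed. There are at most $n\cdot c$ such states, so among $v_0, v_1, \ldots$ some state $(v_j, \text{col}(e_j))$ recurs, say at steps $j < k$ with $v_j = v_k$ and $\mathrm{col}(e_j) = \mathrm{col}(e_k)$. Then the sub-walk $W = v_j e_{j+1} v_{j+1} \cdots e_k v_k$ is closed (it starts and ends at the same vertex), every two consecutive edges inside it differ in color by construction, and the ``wrap-around'' pair $e_k, e_{j+1}$ also differ in color because $\mathrm{col}(e_k) = \mathrm{col}(e_j)$ and $e_{j+1}$ was chosen to differ in color from $e_j$. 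Hence $W$ is a PC closed walk, contradicting the hypothesis.

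I expect the main (minor) obstacle to be handling degenerate cases cleanly: one must make sure the walk has length at least $2$ so that ``PC closed walk'' is meaningful (a single edge traversed back and forth is not properly colored as a closed walk, but the state-repetition argument with $k - j \ge 1$ together with the wrap-around color condition actually forces $k - j \ge 2$ automatically, since if $k = j+1$ then $e_k = e_{j+1}$ would need to differ in color from itself). It is also worth noting at the outset that a vertex of degree $0$ is trivially $G$-monochromatic, so we may assume every vertex has degree at least $1$ and the extension step never gets stuck for lack of \emph{any} edge; the non-monochromatic condition then supplies an edge of the \emph{required} color. With these remarks the argument is essentially a pigeonhole on the $n c$ possible (vertex, incoming-color) states.
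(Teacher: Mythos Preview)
Your argument is correct and complete: the greedy extension together with pigeonhole on the $(v_i,\mathrm{col}(e_i))$ states does exactly what is needed, and your treatment of the wrap-around condition and the $k=j+1$ degeneracy is clean.

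However, your route is genuinely different from the paper's. The paper does not argue directly; instead it reduces to Yeo's theorem (Theorem~\ref{Yeo}) via an \emph{extension} construction: replace each vertex $u$ of $G$ by a sufficiently large independent set $I_u$ with the same adjacencies and colors, observe that $G$ has a PC closed walk if and only if some such extension has a PC cycle, and then apply Theorem~\ref{Yeo} to an extension $H$ with all $|I_u|>1$. Since $H-z$ is connected for every $z$ (when $G$ is connected), the vertex produced by Yeo's theorem is $H$-monochromatic, hence its preimage in $G$ is $G$-monochromatic. Your proof is more elementary and self-contained---it needs nothing beyond finiteness and pigeonhole---whereas the paper's proof is shorter given Theorem~\ref{Yeo} and, more importantly, exhibits the structural link between PC closed walks in $G$ and PC cycles in blow-ups of $G$, a correspondence that is conceptually useful in its own right.
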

\begin{proof}
We call a $c$-edge-colored graph $H$ an {\em extension} of a $c$-edge-colored graph $G$ if $H$ is obtained from $G$ by replacing every vertex $u$ by a set $I_u$ of independent vertices with the same adjacencies and {edge} colors as $u.$
Observe that $G$ has no PC closed walk if and only if no extension of $G$, in which $I_u$ is sufficiently large, has a PC cycle. Now apply Theorem \ref{Yeo} to an extension $H$ of a connected $c$-edge-colored graph $G$ in which $|I_u|>1$ for each $u\in V(G)$, and note that for every vertex $z\in V(H)$, $H-z$ is connected.
\end{proof}

Using Theorem \ref{th3}, similarly to proving  Corollary \ref{cor1} we can show that a $c$-edge-colored graph $G$ is PC acyclic of type 3 if and only if $G$ has no PC closed walk.
Theorem \ref{th3} implies that $G$ has no PC closed walk if and only if $G$ has a vertex $z$ incident with edges of the same color and $G-z$ has no PC closed walk. Thus, checking PC acyclicity of type 3 can be done in polynomial time.

To see that containment is proper between  acyclicities of type 2 and type 3, consider the following graph $G$ with $V(G)=\{a_1,a_2,a_3,b_1,b_2,b_3\}$, with blue edges $a_1 b_1$, $a_2 b_2$ and $a_3 b_3$ and red edges
$a_1 a_2$, $b_1 a_2$, $a_3 b_2$ and $b_3 b_2$. In $G$ we have a PC closed walk $a_1 a_2 b_2 b_3 a_3 b_2 a_2 b_1 a_1$.  This walk uses the edge $a_2 b_2$ twice.  There is no PC closed trail in $G$: as $a_2 b_2$ is a bridge it does not belong to a closed trail and removing $a_2 b_2$ makes it obvious that there is no PC closed trail in the remainder.

There is unlikely to be a 'nice' characterization of $c$-edge-colored graphs of type 4 due to the following somewhat surprising result. 

\begin{theorem}
It is NP-complete to decide whether a 2-edge-colored graph is acyclic of type 4.
\end{theorem}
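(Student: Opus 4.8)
The plan is to recast type-$4$ acyclicity in the language of orientations: a $c$-edge-colored graph $G$ is PC acyclic of type $4$ if and only if $G$ has an acyclic orientation in which, at every vertex, all out-edges have a common color and all in-edges have a common color. (Given a type-$4$ ordering, orient each edge from its lower-indexed endpoint to its higher-indexed one; conversely, a topological ordering of such an orientation is of type $4$.) This puts the recognition problem in NP --- guess the orientation and check acyclicity and the two local color conditions in polynomial time --- so the work is to prove NP-hardness.

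Next I would extract the structural skeleton of the reformulation. Call a vertex \emph{monochromatic} if it is incident with edges of only one color; for such a vertex the two local conditions hold automatically in every orientation, so it is unconstrained. A \emph{dichromatic} vertex $v$ (incident with both colors) can be neither a source nor a sink, and one of its colors must be its out-color and the other its in-color; hence $v$ has exactly two admissible states, and the state of $v$ determines the orientation of every edge at $v$. Two adjacent dichromatic vertices are forced into opposite states, so the subgraph $G[D]$ induced by the dichromatic vertices must be bipartite (its two sides being the two states), and each connected component of $G[D]$ carries a single binary choice. Recognizing type-$4$ acyclicity therefore reduces to: verify that $G[D]$ is bipartite, then choose a state for each component of $G[D]$ so that the induced partial orientation (determined on every edge meeting $D$, free on edges within $V(G)\setminus D$) has no directed cycle --- recalling that a partial orientation extends to an acyclic one exactly when it is itself acyclic.

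The NP-hardness proof is then a gadget reduction from $3$-SAT (or, to exploit the global orientation-reversal symmetry, from NAE-$3$-SAT). Each variable $x$ is modeled by a small dichromatic component of $G[D]$ --- in the simplest design a single vertex $b_x$ with several red and several blue edges to private monochromatic terminals --- whose state encodes the truth value of $x$ and flips the orientation of all incident edges accordingly; using several terminals (and, if needed, small copy/complement gadgets forcing two components into correlated states) one realizes all occurrences and negations of $x$. Each clause $\ell_1 \vee \ell_2 \vee \ell_3$ is modeled by a gadget built entirely from monochromatic vertices and edges --- so it does not fuse the three variable components --- joining the three relevant terminals, engineered so that the edge directions forced at those terminals extend to an acyclic orientation of the gadget if and only if some literal is true, while in the all-false case they close into a directed cycle through the gadget. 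One then checks both implications: a satisfying assignment yields compatible component states with no directed cycle anywhere, hence an acyclic completion; conversely, from any acyclic orientation with the required color properties one reads a satisfying assignment off the component states.

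The crux --- and the main obstacle --- is the gadget engineering together with soundness. One must design the clause gadget (and the wiring and complement gadgets) so that a falsified clause genuinely forces a directed cycle, and, harder, so that these are the \emph{only} sources of directed cycles: no unintended cycle may arise inside a variable gadget, inside a satisfied clause gadget, or across several gadgets through shared monochromatic vertices. Establishing this ``no spurious cycle'' direction, and assembling the pieces into a single $2$-edge-colored simple graph, is where essentially all the technical effort goes; that the reduction has polynomial size is immediate.
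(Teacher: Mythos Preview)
Your orientation reformulation is correct and yields a clean NP-membership argument; the dichromatic/monochromatic analysis and the bipartiteness of $G[D]$ are also right. The gap is in the hardness reduction, specifically the clause gadget. You build each clause gadget ``entirely from monochromatic vertices,'' so every edge inside it has both endpoints monochromatic and is \emph{free} --- it may be oriented either way without violating the local color conditions. But you yourself invoke the key fact that a partial orientation extends to an acyclic total orientation if and only if the forced part is already acyclic: take any topological order of the forced digraph, place the remaining vertices arbitrarily, and orient every free edge forward. In your construction the forced edges are exactly those incident with a dichromatic vertex, hence they all live in the variable (and copy) gadgets; the clause gadgets contribute only free edges and therefore impose no constraint whatsoever. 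No ``all-false'' configuration can force a directed cycle unless that cycle lies entirely among forced edges, and your clause gadget places nothing there. If instead you intend the terminals to become dichromatic once the clause edges are attached, then the terminals' states are already determined by their variables, and one still has to exhibit a concrete gadget that forces a cycle in exactly the all-false case while provably avoiding spurious cross-gadget cycles --- that design and its soundness proof are the entire content of the reduction, and they are absent.

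The paper sidesteps this by reducing from \textsc{Betweenness} rather than satisfiability. Each triple $(x,y,z)$ is encoded by a short PC path $x\,a\,b\,b'\,a'\,z$ together with blue edges $yb$ and $yb'$; any type-$4$ ordering must run monotonically along the PC path, and the red edges $ab$, $b'a'$ then force $y$ strictly between $b$ and $b'$, hence between $x$ and $z$. The ordering flavor of type-$4$ acyclicity matches \textsc{Betweenness} so directly that no cycle-forcing machinery or global soundness argument is needed.
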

\begin{proof}
It is easy to see that our problem is in NP.

To prove NP-hardness, we reduce from the {\sc Betweenness} problem. In this problem, we are given a set of distinct ordered triples of elements from a universe $U.$ Our task is to decide whether there exists a linear ordering of $U,$ such that for every triple $(x,y,z)$ of distinct elements of $U$ in the input, either $x>y>z$ or $z>y>x$ in the ordering (i.e., $y$ must appear between $x$ and $z$). Then we say that each triple is {\em satisfied} by the ordering of $U$. This problem is NP-complete \cite{Opa}. 

In the rest of the proof, it will be convenient for us to write an ordering  $v_1,v_2,\dots, v_p$ of some vertices in a graph as $v_1>v_2>\dots > v_p$.

Given an instance of {\sc Betweenness}, we produce a 2-edge-colored graph $G$ as follows.
We add each element in $U$ as a vertex of $G.$ (These vertices will all be incident  only to blue edges in the final graph.)
For each triple $(x,y,z)$, we create a gadget with vertices $x,y,z$ and new vertices $a(x,y), b(x,y), b(z,y)$, $a(z,y).$
Add blue edges $xa(x,y)$, $b(x,y)b(z,y)$, $za(z,y)$, $yb(x,y)$ and $yb(z,y)$, and red edges $a(x,y)b(x,y)$, $b(z,y)a(z,y)$, see Figure \ref{fig:Example}.

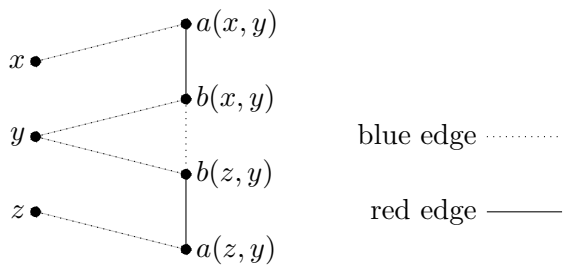
\begin{figure}\centering
\begin{tikzpicture}
\draw [dotted](5,5)[fill]circle [radius=0.07]node [left]{$z$}--(7,4.5)[fill]circle [radius=0.07] node [right]{$a(z,y)$};
\draw [dotted](5,6)circle [radius=0.07] node [left]{$y$}--(7,5.5)[fill]circle [radius=0.07]node [right]{$b(z,y)$};
\draw [dotted](5,6)--(7,6.5)[fill]circle [radius=0.07]node [right]{$b(x,y)$};
\draw [dotted](5,7)[fill]circle [radius=0.07]node [left]{$x$}--(7,7.5)[fill]circle [radius=0.07]node [right]{$a(x,y)$};
\draw [dotted](7,6.5)--(7,5.5);
\draw [](7,5.5)--(7,4.5);
\draw [](7,7.5)--(7,6.5);

 \draw [](11,5)node[left]{red edge}  --(12,5) ;
 \draw [dotted](11,6)node[left]{blue edge}  --(12,6) ;

\end{tikzpicture}
\caption{(x,y,z)-gadget}
 \label{fig:Example}
\end{figure}

This concludes the construction of $G.$

Now suppose $G$ is acyclic of type 4 and consider the ordering within the gadget for a triple $(x,y,z)$.
As $xa(x,y)b(x,y)b(z,y)a(z,y)z$ is a PC path, any acyclic ordering of type 4 must have either $x>a(x,y)>b(x,y)>b(z,y)>a(z,y)>z$ or $z>a(z,y)>b(z,y)>b(x,y)>a(x,y)>x.$
Suppose the former.
Then $y$ cannot appear before $b(x,y)$ because of the red edge $a(x,y)b(x,y),$ and it cannot appear after $b(z,y)$ because of the red edge $b(z,y)a(z,y).$ Thus $y$ must appear between $b(x,y)$ and $b(z,y),$ and in particular $y$ must appear between $x$ and $z$. A similar argument holds when $z>a(z,y)>b(z,y)>b(x,y)>a(x,y)>x.$

Thus, if $G$ is acyclic of type 4, then there is an ordering such that for every input triple $(x,y,z)$, $y$ appears between $x$ and $z$, and so our Betweenness instance is a Yes-instance.

Conversely, suppose our Betweenness instance is a Yes-instance, consider an  ordering of $U$ satisfying every triple of the instance. We extend this to an ordering of $V(G)$. For each triple $(x,y,z)$, if $x>y>z,$ then we set $x>a(x,y)>b(x,y)>y>b(z,y)>a(z,y)>z.$ If $z>y>x,$ then we set $z>a(z,y)>b(z,y)>y>b(x,y)>a(x,y)>x.$
As $a(x,y),b(x,y),b(z,y),a(z,y)$ are involved only in the $(x,y,z)$-gadget, there is an ordering of $V(G)$ that satisfies all the above orderings. It is easy to check that each of $a(x,y),b(x,y),b(z,y),a(z,y)$ is satisfied (in the sense of having all edges to earlier vertices the same color, and all edges to later vertices the same color). As all vertices from $U$ are only incident  to blue edges, they are satisfied by this ordering as well. Thus we have that $G$ is acyclic of type 4, as required.
\end{proof}

To see that containment is proper between  acyclicities of type 3 and type 4, consider a complete graph on three vertices with two blue edges and one red edge. It is easy to find a PC acyclic ordering of type 3 and to see that there is no PC acyclic ordering of type 4.

One can check whether a connected edge-colored graph $G$ is acyclic of type 5 
{ using the following:}

\begin{proc}\label{proc1}
First check that for each vertex in the graph, its incident edges are colored with at most two colors, as otherwise, $G$ cannot be acyclic of type 5. { 
Choose an arbitrary vertex $x$ and orient edges of one color out of $x$ and edges of the other color towards $x$.}
Then for every vertex $y$ for which there is an arc towards (out of, respectively) $y$ which was an edge of color $i$, mark $y$ and
orient all edges of color $i$ incident to $y$ towards (out of, respectively) $y$ and all other edges incident  to $y$ out of $y$ (towards $y$, respectively). 
{ Stop the procedure if orienting edges incident to $y$ leads to one of the following conflicts for another vertex $z$:

\begin{description}
 \item[(a)] $z$ will have two arcs of different color oriented into it or two arcs of different color oriented out of it;
 \item[(b)] $z$ will have an arc into it and an arc out of it of the same color.
\end{description}
}
\end{proc}

Theorem \ref{char5} { will show} that if this procedure is completed (without { conflicts (a) and (b)}) and the obtained digraph is acyclic then $G$ is acyclic of type 5, and otherwise it is not. 
Note that the procedure always finishes in polynomial time.


Here is a characterization of acyclic edge-colored graphs of type 5. 
Recall that a vertex $v$ in an edge-colored graph $G$ is $G$-monochromatic if all edges incident to $v$ in $G$ are of the same color. 

\begin{theorem} \label{char5}  Let $G$ be an edge-colored graph. The following are equivalent.
  \begin{enumerate}
  \item $G$ is PC acyclic of type 5.
  \item  { Procedure \ref{proc1}} completes and the resulting digraph is acyclic.
  \item Every vertex is incident to edges of at most two colors, and every cycle $C$ in $G$ has a positive even number of $C$-monochromatic vertices.
  \end{enumerate}
\end{theorem}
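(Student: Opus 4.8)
The plan is to prove the three-way equivalence by establishing $1 \Rightarrow 3$, $3 \Rightarrow 2$, and $2 \Rightarrow 1$, since a cycle of implications is the most economical route. The combinatorial heart of the argument is item~3, which gives a ``local'' certificate (colors per vertex, parity of monochromatic vertices on cycles) that must be reconciled with both the global ordering of item~1 and the deterministic propagation of Procedure~\ref{proc1} in item~2.

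For $1 \Rightarrow 3$: given a type-5 ordering $v_1 > v_2 > \dots > v_n$, each $v_i$ sees edges of one color going ``down'' (to later vertices) and edges of the other color going ``up'', so every vertex is incident to at most two colors. For a cycle $C$, orient each edge of $C$ from its higher-ordered endpoint to its lower-ordered one; this makes $C$ into a directed structure on the cycle, and a vertex of $C$ is $C$-monochromatic exactly when it is a source or a sink of this orientation (because by the type-5 condition its two $C$-edges have the same color iff one goes up and one goes down — wait, actually the type-5 condition forces up-edges and down-edges to differ in color, so a $C$-monochromatic vertex must have both $C$-edges on the same side, i.e.\ be a source or sink). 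Around a cycle, sources and sinks alternate and are equal in number, and there is at least one of each, giving a positive even count. For $3 \Rightarrow 2$: I would build the type-2 ordering greedily using Theorem~\ref{th3}-style reasoning — or more directly, run Procedure~\ref{proc1} and argue from item~3 that no conflict~(a) or~(b) can arise (a conflict would force, along the propagation path, a short closed alternating structure that when shrunk to a cycle has an odd number of monochromatic vertices, contradicting~3), and that the resulting digraph is acyclic (a directed cycle in it would be a PC closed walk in $G$, hence a cycle with odd monochromatic count after reduction). This essentially proves $3 \Rightarrow 2$ via the chain through Procedure~\ref{proc1}.

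For $2 \Rightarrow 1$: if Procedure~\ref{proc1} completes with an acyclic digraph $D$, take any topological order of $D$ as the vertex ordering of $G$; by construction of the orientation in the procedure, every vertex has all its in-arcs one color and all its out-arcs the other color, and in-arcs point from earlier vertices while out-arcs point to later ones, which is exactly the type-5 condition. The connectivity hypothesis on $G$ is used to guarantee the propagation reaches every vertex and fixes every edge's orientation. The main obstacle I expect is the careful bookkeeping in $3 \Rightarrow 2$: showing that a conflict in Procedure~\ref{proc1}, or a directed cycle in its output, can always be traced back to an actual cycle of $G$ with an odd number of $C$-monochromatic vertices. One must track how the forced orientations propagate along a walk, extract a closed sub-walk witnessing the conflict, and then reduce it to a genuine cycle while controlling the parity of monochromatic vertices — the reduction step (removing repeated vertices/chords) is where the parity argument needs to be watertight, since chords can change which vertices are monochromatic on the smaller cycle. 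A clean way to handle this is to phrase the whole equivalence in terms of a fixed base vertex $x$ and the ``color-to-side'' function it induces, and observe that item~3's parity condition is precisely the consistency condition for this function to be well-defined and acyclic.
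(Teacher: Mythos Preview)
Your proposal follows the same cycle of implications $3 \Rightarrow 2 \Rightarrow 1 \Rightarrow 3$ as the paper, and your arguments for $2 \Rightarrow 1$ and $1 \Rightarrow 3$ are essentially identical to theirs (your sources/sinks are exactly the paper's sets $A_1$ and $A_2$).

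Two small points where you drift from the paper in $3 \Rightarrow 2$. First, a directed cycle in the output digraph $D$ is already a genuine cycle of $G$, not merely a PC closed walk needing reduction; and every vertex on it has one in-arc and one out-arc, hence (by how the procedure orients) edges of two different colors. So the number of $C$-monochromatic vertices on that cycle is \emph{zero}, not odd. This still contradicts item~3, since item~3 demands a \emph{positive} even number. Second, your anticipated difficulty with ``shrinking a closed walk to a cycle while controlling parity'' does not arise in the paper's argument: when a conflict occurs at $v_k$ while processing $v_r$, the already-processed vertices $v_1,\dots,v_r$ induce a connected subgraph, so one simply takes any path $P$ from $v_r$ to the earlier neighbor $v_j$ of $v_k$ inside that subgraph and closes it through $v_k$; since $v_k \notin \{v_1,\dots,v_r\}$, this is automatically a simple cycle. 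The parity count then goes through because every vertex on $C$ other than $v_k$ was processed without conflict (so it is $C$-monochromatic iff it is a source or sink of the induced orientation), while at $v_k$ the conflict flips exactly that equivalence, yielding an odd total.
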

\begin{proof}
We will show that the implications $3 \Rightarrow 2 \Rightarrow 1 \Rightarrow 3$ hold. 

First we will show that $2 \Rightarrow 1$.  Assume that { Procedure \ref{proc1}}  completes and the resulting digraph, $D$, is acyclic.
Let $v_1,v_2,v_3,\ldots,v_n$ be an acyclic ordering of $D$ (that is, if $v_i v_j$ is an arc of $D$ then $i<j$). By the construction of $D$ we note that all
arcs into a vertex $v_i$ have the same color and all arcs out of $v_i$ also have the same color, which is different from the arcs entering $v_i$. As this holds for
all $v_i$ we note that the ordering $v_1,v_2,v_3,\ldots,v_n$ is a PC acyclic ordering of type 5 in $G$, which proves $2 \Rightarrow 1$.

We will now show that $1 \Rightarrow 3$, so assume that ${\cal O} = v_1,v_2,v_3,\ldots,v_n$ is a PC acyclic ordering of type 5 in $G$.
By the definition of PC acyclic ordering of type 5, every vertex is incident to edges of at most two colors. Let $C$ be any cycle in $G$. 
Let $A_1$ contain the vertices, $v_i$, of $C$ where both neighbors of $v_i$ on $C$ lie after $v_i$ in the ordering ${\cal O}$. 
Let $A_2$ contain the vertices, $v_i$, of $C$ where both neighbors of $v_i$ on $C$ lie before $v_i$ in the ordering ${\cal O}$. 
Let $B = V(C) \setminus (A_1 \cup A_2)$.  That is,
$B$ contains the vertices, $v_i$, of $C$ where one neighbors of $v_i$ lies after $v_i$ in the ordering ${\cal O}$  and the other lies before.
Note that $|A_1|=|A_2|$, as the cycle changes from going 'forward' to 'backward' $|A_2|$ times and changes from going 'backward' to 'forward' $|A_1|$ times.
Furthermore $|A_1|>0$ (and $|A_2|>0$) as the vertex of $C$ with minimum index in ${\cal O}$ belongs to $A_1$.
As the $C$-monochromatic vertices on $C$ are exactly $A_1 \cup A_2$, we note that this is an even positive number ($=2|A_1|=2|A_2|$). This proves $1 \Rightarrow 3$.

We will now show that $3 \Rightarrow 2$. We will prove this by showing that if $2$ is false, then $3$ is false.
So assume that { Procedure \ref{proc1}}  either does not complete or the resulting digraph, $D$, is not acyclic.
First assume that it does complete, but the resulting digraph, $D$, is not acyclic and let $C$ be a directed cycle in $D$.
Note that in $G$ there is no $C$-monochromatic vertices, which implies that $3$ is false.

{
We may thus assume that Procedure \ref{proc1} does not complete. 
Let $v_1, v_2, v_3, \ldots$ be the order in which the vertices are considered by the procedure.
Let $v_r$ be the first vertex (that is $r$ is smallest possible) such that when orienting all edges incident to $v_r$ some other vertex, $v_k$, will have one of the two conflicts in Procedure \ref{proc1}:

\begin{description}
 \item[(a)] $v_k$ will have two arcs of different color oriented into it or two arcs of different color oriented out of it;
 \item[(b)] $v_k$ will have an arc into it and an arc out of it of the same color.
\end{description}

Note that $k > r$, since otherwise we already considered all edges incident with $v_k$ and when considering $v_r$ 
it will not orient any edges in an opposite direction to what it is already oriented (by the minimality of $r$).
Note that when we consider some $v_i$, it has an arc to or from a vertex in $\{v_1,v_2,\ldots,v_{i-1}\}$ for all $i$.
Therefore $G[v_1,v_2,\ldots,v_r]$ is connected.
By (a) or (b), $v_k$ has two edges that have already been oriented after considering the vertices $v_1,v_2,\ldots,v_r$ (and form a conflict as in (a) or (b)), 
namely $v_rv_k$ and $v_jv_k$ for some $j<r$.
For a path $P$ between $v_r$ and $v_j$ in $G[v_1,v_2,\ldots,v_r]$, let $C$ be the cycle obtained by adding the edges $v_rv_k$ and $v_kv_j$ to $P$.
Let $A_1$ be all vertices on $C$, where both edges in $C$ are oriented into the vertex and 
let $A_2$ be all vertices on $C$, where both edges in $C$ are oriented out the vertex.
Note that all vertices in $V(C) \setminus \{v_k\}$ are $C$-monochromatic if and only if they belong to $A_1 \cup A_2$. Furthermore 
$v_k$ is $C$-monochromatic if and only if it does not belong to $A_1 \cup A_2$. Observe that $|A_1|=|A_2|$, so $|A_1 \cup A_2|$ is even, which implies that
there are an odd number of $C$-monochromatic vertices on $C$, and so $3$ does not hold.  Therefore $3 \Rightarrow 2$.
}
\end{proof}

Theorem \ref{char5} implies { the following simpler characterization for the case $c=2.$}

\begin{corollary} \label{prop5} 
A 2-edge-colored graph $G$ is PC acyclic of type 5 if and only if it is bipartite and has no PC cycle. 
\end{corollary}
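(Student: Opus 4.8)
The plan is to derive this directly from the characterization in Theorem \ref{char5}, specialized to $c=2$, together with the already-established fact (Corollary \ref{cor1}) that "PC acyclic of type 1" coincides with "no PC cycle." By the final sentence of the introduction, the claim is equivalent to: a 2-edge-colored graph is PC acyclic of type 5 iff it is bipartite and PC acyclic of type 1. I would use condition 3 of Theorem \ref{char5}: $G$ is PC acyclic of type 5 iff every vertex is incident to edges of at most two colors, and every cycle $C$ in $G$ has a positive even number of $C$-monochromatic vertices. When $c=2$ the first condition is vacuous, so only the cycle condition matters.

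First I would unpack what "$C$-monochromatic" means for a cycle $C$ in a 2-edge-colored graph. Traverse $C$ cyclically; a vertex $v$ on $C$ is $C$-monochromatic exactly when its two incident edges on $C$ have the same color. So as we go around $C$, the color sequence of edges changes precisely at the non-$C$-monochromatic vertices, and the number of "color changes" around a closed walk is always even. Hence the number of non-$C$-monochromatic vertices on $C$ is always even, which means the number of $C$-monochromatic vertices on $C$ has the same parity as $|C|$. Therefore the cycle condition in part 3 says: every cycle $C$ has a \emph{positive even} number of $C$-monochromatic vertices. I would then observe that "$C$ has zero $C$-monochromatic vertices" means every vertex of $C$ sees two different colors on $C$, i.e. the edge colors alternate around $C$, which for a 2-edge-colored cycle forces $|C|$ even and says exactly that $C$ is a PC cycle. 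So the cycle condition fails for $C$ in one of two ways: either $C$ is a PC cycle (zero $C$-monochromatic vertices), or $C$ has an odd number of $C$-monochromatic vertices, which (by the parity remark above) happens iff $|C|$ is odd.

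Putting this together: $G$ is PC acyclic of type 5 iff no cycle of $G$ is a PC cycle and no cycle of $G$ has odd length, i.e. iff $G$ has no PC cycle and $G$ is bipartite. By Corollary \ref{cor1}, "$G$ has no PC cycle" is the same as "$G$ is PC acyclic of type 1", giving the stated form. The main thing to get right — and the only genuine content beyond bookkeeping — is the parity argument that the number of color changes around a cycle is even and therefore the $C$-monochromatic count has the parity of $|C|$; everything else is rephrasing definitions. I would present the two directions explicitly (type 5 $\Rightarrow$ bipartite and PC-acyclic; bipartite and PC-acyclic $\Rightarrow$ type 5) but both follow from the single equivalence above, so a short unified argument suffices.
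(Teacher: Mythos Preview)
Your proposal is correct and follows essentially the same route as the paper: both invoke condition~3 of Theorem~\ref{char5}, note that for $c=2$ the ``at most two colors'' clause is vacuous, and use the parity observation that the number of $C$-non-monochromatic vertices on any cycle is even (so the number of $C$-monochromatic vertices has the parity of $|C|$). Your write-up is in fact more explicit than the paper's in separating the ``positive'' part of the cycle condition (zero $C$-monochromatic vertices means $C$ is a PC cycle) from the ``even'' part (odd count forces $|C|$ odd), but the underlying argument is the same.
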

\begin{proof} 
Observe that every cycle $C$ in a 2-edge-colored graph has an even number of $C$-non-monochromatic vertices, since the edge color must change an even number of times as we go around the cycle back to the starting point. Therefore a 2-edge-colored graph contains an odd cycle if and only if it contains a cycle $C$ with an odd number of $C$-monochromatic vertices. The result now follows by Theorem~\ref{char5}.
%
%
\end{proof}

 To see that containment is proper between acyclicities of type 4 and type 5, consider any non-bipartite 2-edge-colored graph with all edges being blue.

\section{PC Paths and Separators}\label{sec:Paths}

This section will be devoted to separators and PC paths in edge-colored graphs. 
We consider two problems. Let a $c$-edge-colored graph $G$ and distinct vertices $x, y \in V(G)$ be given.
The \emph{minimum PC separator} problem is to find a minimum-size set $S\subseteq V(G)\setminus \{x,y\}$ such that there is no PC path between $x$ and $y$ in $G-S$; the \emph{PC path packing} problem is to find the maximum number of internally vertex-disjoint PC paths between $x$ and $y$ in $G$. We will see that both problems are NP-hard on graphs $G$ that are PC acyclic of types 1, 2 or 3, even for~$c=2$. We also find that Menger's theorem fails to hold for these graphs.
On the other hand, we show that the analogue of Menger's theorem does hold for graphs~$G$ that are PC acyclic of type 4, and that both the minimum PC separator problem and the PC path packing problem are in P on these graphs, even if no acyclic ordering of type 4 is given. 
These results hold for arbitrary~$c$. 

We begin with the hardness results. 

\begin{theorem}\label{thm:noseparator}
The minimum PC separator problem is NP-hard for 2-edge-colored graphs which are acyclic of type 3. 
\end{theorem}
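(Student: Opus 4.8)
The plan is to prove NP-hardness by a polynomial-time reduction from \textsc{Vertex Cover}. Given a graph $H$ and an integer $k$, I would build a $2$-edge-colored graph $G$ with two designated vertices $x,y$ as follows. For each $u\in V(H)$ create a \emph{selector} vertex $\hat u$. Put $N=k+1$. For each edge $uv\in E(H)$ create a \emph{bundle} $B_{uv}$ consisting of $N$ internally vertex-disjoint $x$--$y$ paths, each of the form $x - c_1 - \hat u - c_2 - \hat v - c_3 - y$, where $c_1,c_2,c_3$ are fresh ``connector'' vertices private to that single path. Color every edge incident to $x$ blue, every edge incident to $y$ red, every edge from a selector to the connector on its $x$-side red, and every edge from a selector to the connector on its $y$-side blue. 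Then each bundle path is properly colored (blue, red, blue, red, blue, red), and every selector $\hat u$ is incident only to red edges (toward $x$-side connectors) and blue edges (toward $y$-side connectors).

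The core of the correctness argument is a structural claim: \emph{every} PC $x$--$y$ path $P$ in $G$ begins $x - c_1 - \hat u - c_2 - \hat w - \cdots$ with $uw\in E(H)$; in particular $P$ passes through two selectors $\hat u,\hat w$ that are adjacent in $H$. This holds because the neighbors of $x$ are precisely the $x$-side connectors, a connector has degree $2$, and once $P$ enters a selector along a red edge it is forced out along a blue edge to a $y$-side connector, which in turn leads to a selector of an adjacent bundle. Granting this claim, the equivalence is routine. If $C$ is a vertex cover of $H$ with $|C|\le k$, then $S=\{\hat u : u\in C\}$ is a PC separator of size at most $k$: the two selectors that every PC $x$--$y$ path traverses span an edge of $H$, at least one of whose endpoints lies in $C$. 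Conversely, if $S$ is a PC separator with $|S|\le k$, then for every edge $uv\in E(H)$ we must have $\hat u\in S$ or $\hat v\in S$; otherwise all $N=k+1$ paths of $B_{uv}$ survive the deletion of $\hat u,\hat v$, so $S$ meets each of them in one of its private connectors, and these connectors are pairwise distinct, forcing $|S|\ge N>k$. Hence $\{u : \hat u\in S\}$ is a vertex cover of $H$ of size at most $k$.

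It then remains to check that $G$ is PC acyclic of type~3, which I would do by exhibiting a type-$3$ ordering of $V(G)$: place $x$ first, then all $x$-side connectors, then all selectors, then all $y$-side connectors, then all connectors adjacent to $y$, and finally $y$. With the coloring above, each vertex sends only blue edges, only red edges, or no edges at all to the vertices following it, so the ordering has type~$3$ and $G$ has no PC closed walk. The step I expect to be the main obstacle is the structural claim on the shape of PC $x$--$y$ paths: since PC paths are far more flexible than directed paths, one has to argue carefully that no ``rogue'' PC path can escape a bundle through a connector or weave between bundles in an uncontrolled fashion, and that the coloring chosen to enforce this is simultaneously compatible with the type-$3$ ordering. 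The rest is bookkeeping.
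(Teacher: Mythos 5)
Your reduction has a genuine flaw: the ``structural claim'' that every PC $x$--$y$ path visits two adjacent selectors is false, and with it the whole equivalence collapses. The problem arises whenever a vertex $u$ of $H$ is the \emph{first} selector in one bundle and the \emph{second} selector in another. Say $\hat u$ is first in $B_{uv}$, with $x$-side connector $c_1$, and second in $B_{wu}$, with $y$-side connector $c_3'$. Under your coloring, $xc_1$ is blue, $c_1\hat u$ is red, $\hat u c_3'$ is blue, and $c_3'y$ is red, so $x\,c_1\,\hat u\,c_3'\,y$ is a properly colored $x$--$y$ path that passes through only the single selector $\hat u$. There are $N^2$ such rogue paths through each such $\hat u$, so any separator of size at most $k<N$ is forced to contain $\hat u$ whether or not $u$ is needed in a vertex cover. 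For $H=C_4$ with the natural orientation of edges by index, this forces both middle vertices of the orientation into every small separator and makes the minimum separator strictly larger than the minimum vertex cover, so the claimed equivalence fails. For general (non-bipartite) $H$ you cannot avoid selectors playing both roles, so the construction cannot be repaired by reorienting the bundles. The same double role also breaks your proposed type-3 ordering: a selector that is first in one bundle and second in another sends a blue edge to one middle connector and a red edge to another, so its edges to later vertices are not monochromatic.

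The paper's proof is a much more direct reduction from Vertex Cover that sidesteps all of this. It takes $V(G)=V(H)\cup\{x,y\}$, keeps every edge of $H$ colored red, and joins $x$ and $y$ to every vertex of $H$ by blue edges. Then a PC $x$--$y$ path in $G-S$ is exactly a path $x\,u\,v\,y$ with $uv$ an edge of $H-S$, so $S$ is a PC separator if and only if $S$ is a vertex cover of $H$; no bundles or multiplicities are needed because the separator problem is already an optimization over vertex sets. Type-3 acyclicity is immediate by placing $x$ and $y$ at the start of the ordering, since then every remaining vertex sends only red edges to later vertices. If you want to keep a gadget-based construction, you would at minimum need to split each selector so that its ``incoming'' and ``outgoing'' roles cannot be concatenated into a short PC path, but the simpler direct reduction makes this unnecessary.
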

\begin{proof}
We give a reduction from the vertex cover problem which is to find a minimum size vertex cover of a given graph. Given an instance $H$ of the vertex cover problem, we construct a 2-edge colored graph $G$ which is PC acyclic of type 3, and two distinct vertices $x, y$, such that a set $S$ of vertices is a vertex cover of $H$ if and only if there is no PC path between $x$ and $y$ in $G-S$.

Let $V(G)=V(H)\cup\{x,y\}$ and $E(G)=E(H)\cup\{xu: {u \in V(H)}\}\cup \{vy:  {v \in V(H)}\}$, and let us color all edges in $E(H)$ red and all the edges incident to $x$ or $y$ blue. It is easy to see that graph $G$ is PC acyclic of type 3, just put $x$ and $y$ in the beginning of the vertex ordering. 

Observe that for any vertex set $S\subseteq V(H)= V(G)\setminus \{x,y\}$, there is a PC path between $x$ and $y$ in graph $G-S$ if and only if there is at least one edge in the graph $H-S$. Thus a vertex set $S$ is a vertex cover of $H$ if and only if there is no PC path between $x$ and $y$ in $G-S$. Thus we have given a polynomial reduction from the vertex cover problem to our problem, which implies the NP-completeness. 
\end{proof}

\begin{theorem}\label{thm:nopacking}
The PC path packing problem is NP-hard for 2-edge-colored graphs which are acyclic of type 3. 
\end{theorem}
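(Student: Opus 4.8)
The plan is to give a polynomial reduction from $3$-SAT to the PC path packing problem restricted to $2$-edge-colored graphs that are PC acyclic of type $3$ (a reduction from maximum independent set or $3$-dimensional matching in the same spirit should also work). Note first that the vertex-cover construction of Theorem~\ref{thm:noseparator} does \emph{not} suffice here: in that graph every PC path between $x$ and $y$ has the form $x,u,v,y$ with $uv\in E(H)$, so the maximum packing equals the maximum matching of $H$ and is computable in polynomial time. So a genuinely new construction is needed. Given a $3$-CNF formula $\phi$ with variables $x_1,\dots,x_n$ and clauses $C_1,\dots,C_m$, I would build a $2$-edge-colored graph $G$ with two distinguished vertices $s,t$ such that $G$ is PC acyclic of type $3$ and $G$ admits $n+m$ pairwise internally vertex-disjoint PC paths between $s$ and $t$ if and only if $\phi$ is satisfiable.

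For the gadgets I would use a \emph{variable gadget} for each $x_i$, consisting of two internally disjoint ``half-paths'' from $s$ to $t$ (a true-branch and a false-branch) of which a packing can use at most one, so that $n$ of the paths encode a truth assignment; and a \emph{clause gadget} for each $C_j$, which admits a PC path between $s$ and $t$ only when that path can be routed through the piece of some variable gadget corresponding to a literal satisfied by the chosen assignment. Sharing of internal vertices between a clause gadget and the literal-pieces of the variable gadgets couples the $m$ clause paths to the $n$ variable paths and enforces both the consistency of the assignment and the satisfaction of every clause. Crucially, I would choose the edge colouring and a global vertex ordering simultaneously: fix the order $s$, then the variable gadgets (in order), then the clause gadgets (in order), then $t$, with chosen internal orders inside each gadget, and colour each edge so that (a) consecutive edges along every intended PC path get different colours and (b) at every vertex the edges to later vertices are monochromatic. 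Condition (b) is exactly the definition of an ordering of type $3$, so once such a consistent colouring is exhibited, PC acyclicity of type $3$ of $G$ is immediate.

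It remains to prove the equivalence. The ``if'' direction is routine: a satisfying assignment selects the corresponding variable half-paths and, for each clause, a routing through a satisfied literal, and one checks that these $n+m$ paths are pairwise internally disjoint. For the ``only if'' direction one takes $n+m$ pairwise internally disjoint PC paths between $s$ and $t$ and argues, by a counting/bottleneck argument — each variable gadget can host at most one half-path, each clause gadget at most one path, and a vertex incident to only one edge of some colour lies on at most one PC path — that the packing must consist of exactly one half-path per variable and one path per clause, that the half-paths are mutually consistent, and that each clause path exhibits a satisfied literal.

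I expect the main obstacle to be ruling out \emph{spurious} PC paths between $s$ and $t$: paths that cut across several gadgets in unintended ways (jumping between two variable gadgets, or entering a gadget ``from the wrong end'' — since an undirected PC path passing through a vertex of degree at least $4$ can ``turn'' rather than go straight through), because even one extra path of this kind could inflate the maximum packing and break the ``only if'' direction. I would control this by keeping the shared vertices of small degree, inserting single-colour ``valves'' (vertices with exactly one incident edge of one colour, hence lying on at most one PC path) at the gadget interfaces, and checking gadget by gadget that every PC path between $s$ and $t$ is one of the intended ones. Carrying this out while still respecting the rigid forward-monochromatic condition of type $3$ — making $G$ simultaneously expressive enough to simulate $3$-SAT and structured enough to be type-$3$ acyclic and free of spurious paths — is where the real work lies.
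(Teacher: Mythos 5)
Your proposal is a plan rather than a proof: the central objects of the reduction --- the variable and clause gadgets --- are never actually constructed, and you yourself locate ``the real work'' precisely in the part you leave undone. The gap is not cosmetic. The type-3 condition forces, at every vertex, all edges to later vertices to share a color; consequently every internal vertex of a PC path must use at least one ``backward'' edge, which severely limits how paths can traverse a gadget and makes the simultaneous demands you list --- alternation along intended paths, forward-monochromaticity at every vertex, shared vertices coupling clause gadgets to variable gadgets, and the exclusion of spurious cross-gadget paths --- genuinely in tension. Without explicit gadgets one cannot check that a consistent coloring and ordering exist at all, let alone that the only $s$--$t$ PC paths are the intended ones; so the correctness of the reduction is not established. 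Your opening observation is a good one, though: the vertex-cover graph of Theorem~\ref{thm:noseparator} really is useless for packing, since there the maximum packing equals a maximum matching of $H$.

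For comparison, the paper sidesteps all of these difficulties by reducing from a different problem: the restricted bipartite perfect matching problem of Plaisted and Zaks (decide whether a bipartite graph has a perfect matching using at most one edge from each given set of size at most two). The construction attaches $x$ to $V_1$ and $V_2$ to $y$ by blue edges, keeps unconstrained edges red, and replaces each size-two constraint $\{u_iv_j,u_kv_l\}$ by a four-edge gadget through two new vertices $p_S,q_S$ joined by a blue edge. Because every vertex of the original bipartite graph then sees only red edges away from $x$ and $y$, every $x$--$y$ PC path is forced to have length $3$ or $5$ and to correspond to a single matching edge, and the blue edge $p_Sq_S$ acts as the bottleneck enforcing the ``at most one edge per set'' constraint. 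This choice of source problem makes the ``no spurious paths'' analysis essentially trivial, which is exactly the part your 3-SAT route leaves open. If you want to salvage your approach, you would need to write down the gadgets explicitly and verify both the type-3 ordering and the classification of all $x$--$y$ PC paths; as it stands, the statement is not proved.
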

\begin{proof}
We will give a reduction from the following problem called the restricted bipartite perfect matching problem: Given a bipartite graph $G$ and a partition of its edges into sets of size at most 2, decide whether $G$ has a perfect matching containing at most one edge from each partition set (we will call such a perfect matching {\em restricted}). This problem was proved to be NP-complete by Plaisted and Zaks \cite{PlZa}.

Let $(G,{\cal S})$ be an instance of the restricted bipartite perfect matching problem, where $G=(V_{1}\cup V_2, E)$ is a bipartite graph with $|V_1|=|V_2|$, and ${\cal S}$ is the collection of partite sets of size 2. Note that we may require that for every size-2 set~$\{e_i,e_j\} \in {\mathcal{S}}$ that~$e_i$ and~$e_j$ are vertex-disjoint; if not, we may split~$\{e_i,e_j\}$ into two sets~$\{e_i\}, \{e_j\}$ of size~1, as no matching can use both edges simultaneously.

We construct a PC acyclic 2-edge colored graph $H$ of type 3 in the following way. Let $E_1$ and $E_2$ denote the sets of blue and red edges of $H$, respectively. 
Introduce two new vertices $x, y$, add all edges from $\{xu: u\in V_{1}\}\cup \{vy: v\in V_2\}$ to $E_1$. For each edge $uv$ of $E$ which is not in any 2-size set of $S$, we add $uv$ to $E_{2}$. For any 2-size set $S=\{u_{i}v_{j}, u_{k}v_{l}\}$ of $\cal S$, where $\{u_{i}, u_{k}\}\subseteq V_1$, $\{v_{j}, v_{l}\}\subseteq V_2$, we add new vertices $p_S, q_S$ to $V(H)$, add edges $u_{i}p_{S}, v_{l}p_{S}, u_kq_S, v_{j}q_S$ to $E_2$, and add edge $p_{S}q_{S}$ to $E_1$; see Figure~\ref{fig:Example2}. This completes the construction of $H$. Observe that any ordering of $V(H)$ starting with $x$ and $y$ then containing all vertices of $G$ and finally having all other vertices, is PC acyclic of type 3.

\begin{figure}\centering
\begin{tikzpicture}
\draw [dotted](10,4.5)[fill]circle [radius=0.07]node [below]{$q_S$}--(7,4.5)[fill]circle [radius=0.07] node [below]{$p_S$};

\draw [dotted](7,6)circle [radius=0.07] node [left]{$x$}--(8,5.5)[fill]circle [radius=0.07]node [above]{$u_k$};
\draw [dotted](7,6)--(8,6.5)[fill]circle [radius=0.07]node [above]{$u_i$};
\draw [dotted](9,6.5)[fill]circle [radius=0.07]node [above]{$v_j$};
\draw (9,5.5)[fill]circle [radius=0.07]node [above]{$v_l$};
\draw [](8,5.5)--(10,4.5);
\draw [](9,5.5)--(7,4.5);
\draw [dotted](10,6)[fill]circle [radius=0.07] node [right]{$y$}--(9,5.5);
\draw [dotted](10,6)--(9,6.5);
\draw [](8,6.5)--(7,4.5);
\draw (9,6.5)--(10,4.5);

\draw [](3,5)[fill]circle [radius=0.07]node [left]{$u_k$}--(5,5)[fill]circle [radius=0.07] node [right]{$v_l$};

\draw [](3,6)circle [radius=0.07] node [left]{$u_i$}--(5,6)[fill]circle [radius=0.07]node [right]{$v_j$};

 \draw [](13,5)node[left]{Edge in $E_2$}  --(14,5) ;
 \draw [dotted](13,6)node[left]{Edge in $E_1$}  --(14,6) ;

\end{tikzpicture}
\caption{Construction of $H$}
 \label{fig:Example2}
\end{figure}
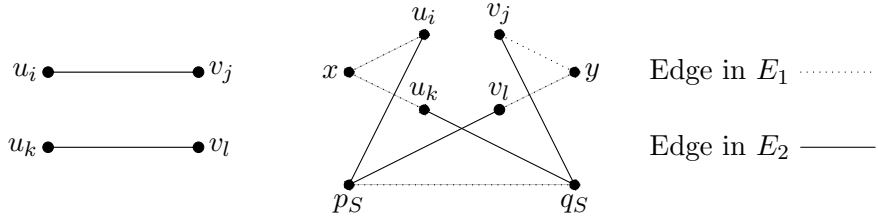

Now we prove that there is a restricted perfect matching in $G$  if and only if there are $|V_{1}|$ internally vertex-disjoint PC paths between $x$ and $y$.

Firstly, if there is a restricted perfect matching in $G$, we can easily get $|V_{1}|$ internally vertex-disjoint alternating paths between $x$ and $y$ in $H$, by constructing a PC path for each edge in the perfect matching. For each  edge $u_{i}v_{j}$ in the perfect matching, forming a 2-size set $S$ of $\cal S$ with some edge $u_{k}v_{l}$, we obtain a PC path $xu_{i}p_{S}q_{S}v_{j}y$ or $xu_iq_Sp_Sv_jy$ between $x$ and $y$. For each other edge $uv$  in the perfect matching, we have a PC path $xuvy$ between $x$ and $y$.

Secondly, if there are $|V_{1}|$ internally vertex-disjoint PC paths in $H$ between $x$ and $y$, there must be a restricted perfect matching in $G$. Observe that all the PC paths are of length 3 or 5 as all vertices in $V(G)$ are adjacent to only color 2 edges in $H-\{x,y\}.$
For each PC path we put an edge into the matching of graph $G$. Any PC path of length 3  must be of the form $xuvy$, so we put edge $uv$ into the matching. For any edge $p_{S}q_{S}$, there is at most one internally disjoint PC path passing through it, thus, if there is one such path  $xu_ip_{S}q_{S}v_jy$ ($xu_kq_{S}p_{S}v_ly$, respectively), we put edge $u_{i}v_{j}$ ($u_kv_l$, respectively) into the matching. Since we will put at most one edge from each set $S$ of $\cal S$ into the matching, the matching is a restricted matching. 
Since there are $|V_{1}|$ internally vertex-disjoint PC paths between $x$ and $y$, we will put $|V_{1}|$ non-adjacent edges into the matching, thus we obtain a perfect matching.
%
 \end{proof}

The same hardness results also hold if we consider edge-disjoint PC paths instead of vertex-disjoint ones, by a simple reduction. Namely, 
let~$G$ be a 2-edge-colored graph. We split every vertex~$v \in V(G)$ into two copies~$v'$ and~$v''$, where~$v'$ is incident with all red edges incident with~$v$, and~$v''$ with all blue edges. 
We also add a third vertex~$v_0$, a blue edge~$v'v_0$ and a red edge~$v_0v''$. 
It is easy to see that applying this transformation to every vertex in~$G$ describes a reduction from the problems described above to their edge separator, respectively edge-disjoint path packing variants.

We now show our positive result.

\begin{theorem}\label{thm3}
Let~$G$ be a~$c$-edge-colored graph which is PC acyclic of type 4, and let~$x, y$ be arbitrary distinct vertices of $G$. 
The minimum PC separator problem and the PC path packing problem for $G$ are both in P, even if no acyclic ordering for $G$ is given.
Furthermore, let $s$ be the minimum size of a subset of $V(G)\setminus \{x,y\}$ which removal eliminates all PC paths between $x$ and $y$ and 
let $t$ be the maximum number of internally vertex-disjoint PC paths between $x$ and $y$ in $G$. 
Then~$s=t$. 
\end{theorem}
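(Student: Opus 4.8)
The plan is to reduce the problem to an ordinary (uncolored) directed Menger's theorem by building an auxiliary directed graph $D$ in which PC paths between $x$ and $y$ in $G$ correspond to directed paths, and in which internal-vertex-disjointness is preserved. The key structural fact I would use is that if $G$ is PC acyclic of type 4, then (by the definition of a type-4 ordering) there is a linear order $v_1 > v_2 > \cdots > v_n$ such that each vertex $v_i$ sees only one color "upward" (towards higher-indexed vertices) and only one color "downward." However, since the theorem insists the algorithm must work \emph{without} being given such an ordering, the first real step is to show that such an ordering can be computed in polynomial time, or, better, that one can work directly with a local structure: in a type-4 graph every vertex $v$ is incident to edges of at most two colors, so partition the edges at $v$ into (at most) two color classes and split $v$ into two copies $v^{\mathrm{in}}, v^{\mathrm{out}}$ joined appropriately, mimicking the standard vertex-capacity-one gadget of Menger's theorem but additionally encoding the color-alternation constraint. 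A PC path through $v$ must enter on one color and leave on the other, so the two color classes at $v$ give exactly the two "sides" of the split vertex.

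Concretely, I would construct $D$ as follows: for each internal vertex $v \neq x,y$, with its two color classes $A_v$ (say color-$a$ edges) and $B_v$ (color-$b$ edges), create vertices $v_A$ and $v_B$ and a single arc between them with unit internal capacity; every color-$a$ edge incident to $v$ is attached to $v_A$ and every color-$b$ edge to $v_B$; to "orient" each edge consistently one uses the type-4 ordering (an edge $uv$ with $u>v$ gets oriented $u \to v$, and its color at $u$ is the "upward" color, its color at $v$ the "downward" color). A short case check shows that directed $x$–$y$ paths in $D$ are in bijection with PC $x$–$y$ paths in $G$, and that vertex-disjointness in $G$ corresponds to arc-disjointness (through the split arcs) in $D$, so classical Menger for directed graphs gives $s=t$ together with polynomial-time algorithms (max-flow) for both the separator and the packing problems. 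The acyclicity of the type-4 ordering guarantees the orientation is well-defined and, crucially, that no PC walk can "double back," which is what makes the bijection with \emph{paths} (not merely walks) hold.

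The main obstacle I anticipate is precisely the requirement that no type-4 ordering is supplied: one must either recover an ordering efficiently, or argue that the split-and-orient construction can be carried out using only local information (the two color classes at each vertex) plus a globally consistent orientation, and then prove that such a consistent orientation exists and is computable. I expect this to reduce to showing that the "conflict propagation" argument (analogous to Procedure~\ref{proc1} and the proof of Theorem~\ref{char5}, but without the requirement that the two colors at a vertex differ across the in/out split) either succeeds — yielding the orientation — or exhibits a PC closed walk, contradicting type-4 acyclicity. The remaining steps are routine: (i) verify the bijection between PC $x$–$y$ paths and directed $x$–$y$ paths in $D$; (ii) verify the correspondence of separators and of disjoint path families; (iii) invoke directed Menger and max-flow to conclude $s=t$ and membership in P. I would also note the mild subtlety that $x$ and $y$ themselves need not be split (their incident edges may use any colors, since a PC path only constrains \emph{consecutive} edges, and the first/last edge of an open PC path is unconstrained on the endpoint side), so the construction treats $x,y$ as ordinary sources/sinks.
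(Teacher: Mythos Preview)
Your overall shape is right --- reduce to a DAG and apply directed Menger --- and it matches the paper's approach. But there is one genuine gap. You propose to recover a consistent orientation by a ``conflict propagation'' argument analogous to Procedure~\ref{proc1}, only weakened so that the in/out colors at a vertex need not differ. This is exactly where the difficulty hides: at a \emph{monochromatic} internal vertex all incident edges have the same color, so your local split into $v_A,v_B$ collapses and the propagation rule gives no information about which edges should go ``up'' and which ``down.'' In fact, Theorem~2 of the paper shows that recognizing type~4 is NP-hard, so you should be suspicious of any procedure that purports to reconstruct a type-4 ordering by local consistency alone.

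The paper sidesteps this with one clean observation you are missing: a monochromatic vertex other than $x$ or $y$ can never be internal to a PC path, so you may delete all such vertices first. In the resulting graph $H$ every vertex except possibly $x,y$ sees exactly two colors; since $H$ inherits a type-4 ordering from $G$, its first and last vertices must be monochromatic, hence they are $x$ and $y$. But then on $H$ the notions of type~4 and type~5 coincide, and a type-5 ordering \emph{is} computable in polynomial time via Procedure~\ref{proc1}. Orienting every edge forward in that ordering, PC $x$--$y$ paths are exactly the directed $x$--$y$ paths (each PC step is forced to move monotonically), so no vertex splitting is needed and Menger applies directly. Your splitting gadget is not wrong, just unnecessary once the monochromatic-deletion step gives you type~5.
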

\begin{proof}
  The solution to both problems, and the proof of the Menger's theorem analogue, will follow the same basic pattern. Observe that we may delete from~$G$ every monochromatic vertex distinct from~$x$ and~$y$, since such a vertex cannot be an internal vertex of any PC path. This leaves a graph, say~$H$, where every vertex except~$x$ and~$y$ is incident to edges of two colors. We show that~$H$ is PC acyclic of type 5, and that the first and last vertices of any corresponding vertex ordering are~$x$ and~$y$ (or~$y$ and~$x$), respectively. 

For this, we first note that~$x$ and~$y$ are both monochromatic in~$H$. Indeed, by assumption~$G$ is PC acyclic of type 4, which shows that there exists an induced type 4 ordering for~$H$. The first and last vertices of this ordering must be monochromatic in~$H$; hence these vertices are equal to~$x$ and~$y$. 
Furthermore, it is easy to see that for a graph with only two monochromatic vertices, the notions of being PC acyclic of type 4 and 5 coincide. Thus~$H$ is PC acyclic of type 5.

We now proceed as follows. Compute a PC acyclic ordering for~$H$ of type 5, and if necessary reverse it so that it begins with the vertex~$x$ and ends with~$y$. Note that every PC path from~$x$ to~$y$ uses the edges of~$H$ in the forward direction of the ordering only. Hence we may transform~$H$ into a digraph~$D$ by orienting every edge of~$H$ from its lower-index vertex to its higher-index vertex in the ordering, and solve the corresponding problem on~$D$ using well-known polynomial-time algorithms~\cite{JBJGG}. Menger's theorem also follows from this same reduction.
\end{proof}

Finally, we note that Menger's theorem fails to hold if~$G$ is only PC acyclic of type 3, even for~$c=2$. 
Consider the following 2-edge-colored graph $G$, see Figure \ref{fig:Example3}. Let $V(D)=\{v_{1},\ldots, v_{8}\}$, $E_1=\{v_{2}v_{3}, v_{4}v_{5}, v_{6}v_{7}\}$, $E_2=\{v_{1}v_{2}, v_{1}v_{3}, v_{1}v_{5}, v_{2}v_{6}, v_{3}v_{4}, v_{4}v_{6}, v_{5}v_{8}, v_{7}v_{8}\}$. We color edges in $E_1$ blue and edges in $E_2$ red. It is easy to check that the ordering $v_{8}v_{7}v_{6}v_{1}v_{2}v_{3}v_{4}v_{5}$ of $V(G)$ is PC acyclic of type 3. 

Let $x=v_{1}, y=v_{8}$, note that any PC path between $x$ and $y$ uses at least two blue edges, thus there is at most one internally vertex-disjoint PC path between $x$ and $y$. However, after deleting any vertex apart from $\{x, y\}$ the remaining graph will still have a PC path between $x$ and $y$: after deleting $v_{2}$ or $v_{3}$, we have PC path $xv_{5}v_{4}v_{6}v_{7}y$; after deleting $v_{4}$ or $v_{5}$, we have PC path $xv_{3}v_{2}v_{6}v_{7}y$; after deleting $v_{6}$ or $v_{7}$, we have PC path $xv_{2}v_{3}v_{4}v_{5}y$. Thus $s>t$. 

\begin{figure}\centering
\begin{tikzpicture}
\draw [](5,7.5)[fill]circle [radius=0.07]node [above]{$v_4$}--(5,4.5)[fill]circle [radius=0.07] node [below]{$v_5$};

\draw [dotted](2,6)circle [radius=0.07] node [left]{$x=v_1$}--(4,5.5)[fill]circle [radius=0.07]node [right]{$v_3$};
\draw [dotted](4,5.5)--(4,6.5)[fill]circle [radius=0.07]node [above]{$v_2$};
\draw [dotted](7,6.5)[fill]circle [radius=0.07]node [above]{$v_6$};
\draw (7,4.5)[fill]circle [radius=0.07]node [below]{$v_7$};
\draw [dotted](2,6)--(4,6.5); 
\draw [dotted](2,6)--(5,4.5);
\draw [dotted](4,5.5)--(5,7.5);
\draw [dotted](4,6.5)--(7, 6.5); 
\draw [](7,6.5)--(7,4.5);
\draw [](8,6)[fill]circle [radius=0.07] node [right]{$v_8=y$}; 
\draw [dotted](8,6)--(7,4.5); 
\draw [](4,6.5)--(4,5.5);
\draw [dotted](5,4.5)--(8,6);
\draw [dotted](5,7.5)--(7,6.5); 



\draw [dotted](12,5)node[left]{Edge in $E_2$}  --(13,5) ;
\draw [](12,6)node[left]{Edge in $E_1$}  --(13,6) ;

\end{tikzpicture}
\caption{Menger's theorem fails on $G$}
 \label{fig:Example3}
\end{figure}
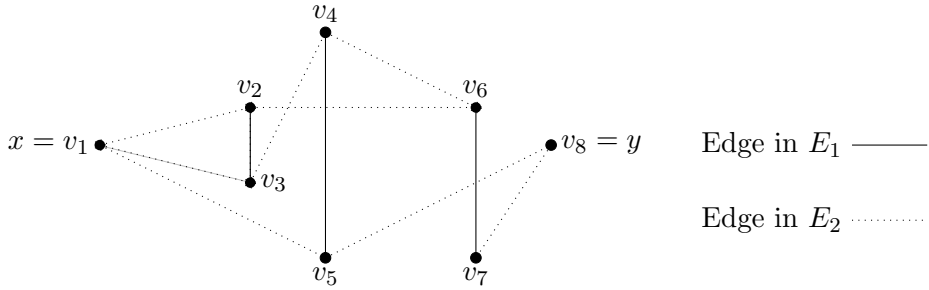

\section{Open Problem}\label{sec:Prob}

Consider the problem of deleting as few vertices as possible from a 2-edge-colored graph to get a  subgraph of PC acyclicity of type 5. We will show that this problem generalizes the directed feedback vertex set problem in digraphs and the bipartization problem. In the directed feedback vertex set problem, given a digraph $D$, find a minimum size vertex set $S$ such that $D-S$ has no directed cycle. In the bipartization problem, given an undirected graph $G$, find a minimum size vertex set $S$ such that $G-S$ is bipartite. Both directed feedback vertex problem and bipartization problem are NP-hard but fixed-parameter tractable with respect to the parameter $|S|$ \cite{CFLMPPS15,DF13}, i.e., both problems can be solved by an algorithm of running time $O(f(|S|)n^d)$, where $f$ is a computable function of $|S|$ only, $n$ is the number of vertices in the input directed or undirected graph, and $d$ is a constant. Thus, the problem of vertex deletion to a PC acyclic 2-edge-colored graph of type 5, is NP-hard, but we do not know whether it is fixed-parameter tractable with respect to the minimum size of a solution. 

For a digraph $D$, let $G$ be the $2$-edge-colored graph obtained by duplicating every vertex $v \in D$, to 
$v'$ and $v''$ and adding a red edge between $v'$ and $v''$. Then for every arc $uv$ in $D$ we add the blue edge $u'' v'$ to $G$. This completes the description of $G$. 
  Clearly $G$ is bipartite.
  If $S$ is a minimum vertex set of $G$ such that $G-S$ is PC acyclic of type $5$, then by Theorem \ref{prop5} $G-S$ has no PC cycle. By Theorem \ref{prop5} and the minimality of $S$ we may assume that $S \subseteq V(D)':=\{v':\ v\in V(D)\}$ (if $v'' \in S$ we can take $v'$ and not $v''$). This implies that $D-S$ has no directed cycle (if it did we would have a PC cycle in $G-S$). 
  Conversely, if $T$ is a minimum feedback vertex set in $D$, then $D-T$ has no directed cycle and it is not difficult to see that $G-T'$ has no PC cycle. As $G-T'$ is bipartite and has no PC cycle,
by Corollary \ref{prop5}  it is of type 5. 
 Thus, our problem generalizes the directed feedback vertex set problem in digraphs.

If $G$ is a graph, then let the edge-colored graph $G'$ be equal to $G$, where all edges are blue. If $G-S$ is bipartite then $G'-S$ is bipartite and has no PC cycle (as all edges have the same color) and therefore $G'-S$ is PC acyclic of type $5$.
  Conversely, if $G'-S$ is PC acyclic of type 5, then $G'-S$ is bipartite and $G-S$ is bipartite.
Thus, our problem generalizes   the bipartization problem.

 Of course our problem is more general, as there are many 2-edge-colored graphs that do not arise from directed graphs using the standard transformation given in Section \ref{sec:1}.  
 
 \vspace{3mm}
 
 \noindent{\bf Acknowledgment.} We are very grateful to the referees for their helpful suggestions. Research of GG was partially supported by Royal Society Wolfson Research Merit Award. Research of BS was partially supported by China Scholarship Council.

\end{document}